\documentclass[12pt]{article}

\usepackage{amsthm,amsmath,natbib}
\usepackage[margin=1in]{geometry}
\usepackage[affil-it]{authblk}
\bibliographystyle{biorefs}

\usepackage{graphicx,bm,mathtools,amsfonts,bbm,hyperref}
\numberwithin{equation}{section}
\theoremstyle{plain}
\newtheorem{thm}{Theorem}[section]
\newtheorem{lemma}[thm]{Lemma}
\newcommand{\R}{\mathbb{R}}
\newcommand{\paren}[1]{\left(#1\right)}
\renewcommand{\brack}[1]{\left[#1\right]}
\newcommand{\defeq}{\vcentcolon=}
\newcommand{\sumi}{\sum_{i=1}^n}
\newcommand{\bmat}{\begin{pmatrix}}
\newcommand{\emat}{\end{pmatrix}}
\def\bal#1\eal{\begin{align*}#1\end{align*}}
\newcommand{\argsup}{\operatornamewithlimits{argsup}}
\usepackage{algorithmicx}
\usepackage{algorithm}
\usepackage[noend]{algpseudocode}
\algnewcommand\algorithmicinput{\textbf{Input:}}
\algnewcommand\Input{\item[\algorithmicinput]}
\algnewcommand\algorithmicoutput{\textbf{Output:}}
\algnewcommand\Output{\item[\algorithmicoutput]}
\newcommand{\set}[1]{\left\{#1\right\}}
\newcommand\norm[1]{\left\lVert#1\right\rVert}

\newcommand{\beto}{\beta_1}

\newcommand{\betost}{\beta_1^*}
\newcommand{\psio}{\psi^{(1)}}
\newcommand{\tab}{\hspace*{2em}}

\makeatletter 
\def\@maketitle{ 
\newpage 
\null 
\vskip 2em 
\begin{center} 
\let \footnote \thanks 
{\Large \bfseries \@title \par } 
\vskip 1.5em 
{ \normalsize 
\lineskip .5em 
\begin{tabular}[t]{c} 
\@author 
\end{tabular}\par} 
\end{center} 
\par 
\vskip 1.5em} 
\makeatother

\begin{document}

\title{Modeling microbial abundances and dysbiosis with beta-binomial regression}

\author{Bryan D. Martin \thanks{\texttt{bmartin6@uw.edu}}}
\affil{Department of Statistics, University of Washington}
\author{Daniela Witten}
\affil{Departments of Statistics and Biostatistics, University of Washington}
\author{Amy D. Willis}
\affil{Department of Biostatistics, University of Washington}

\maketitle

\begin{abstract}
Using a sample from a population to estimate the proportion of the population with a certain category label is a broadly important problem.
In the context of microbiome studies, this problem arises when researchers wish to use a sample from a population of microbes to estimate the population proportion of a particular taxon, known as  the taxon's \emph{relative abundance}.
In this paper, we propose a beta-binomial model for this task.
Like existing models, our model allows for  a taxon's relative abundance to be associated with covariates of interest.
However, unlike existing models, our proposal also allows for the overdispersion in the taxon's counts to be associated with covariates of interest.
We exploit this model in order to propose tests not only for differential relative abundance, but also for differential variability.
The latter is particularly valuable in light of speculation that  \emph{dysbiosis}, the perturbation from a normal microbiome that can occur in certain disease conditions, may manifest as a loss of stability, or increase in variability, of the counts associated with each taxon.
We demonstrate the performance of our proposed model using a simulation study and an application to soil microbial data.
\end{abstract}

\section{Introduction}

Estimating the proportion of a population that belongs to a certain category---the relative abundance---is a problem spanning fields as broad as social science, population health, and ecology.
For example, researchers may be interested in estimating the proportion of low-income students who attend competitive higher-education institutions \citep{bastedo2011running}, child mortality rates in Sub-Saharan African regions \citep{mercer2015space}, or the proportion of diseased leaf tissue in coastal grasslands \citep{parker2015phylogenetic}.
In most of these settings, it is not possible to sample the entire population of interest, and it is necessary to estimate the true proportion based on a sample of individuals from the population. 
In this paper we consider the general problem of estimating the prevalence of a category within a population when the category labels of the observed individuals may be correlated.

While this problem is of broad interest, our method is particularly motivated by the ever-increasing number of studies of microbiomes.
A microbiome is the collection of microscopic organisms (microbes), along with their genes and metabolites, that inhabit an ecological niche \citep{poussin2018interrogating}.
Microbes live on and in the human body, and in fact, microbial cells may outnumber human cells \citep{sender2016revised}. 
Because of this, the relative abundance of a microbe---or a \textit{taxon}, which refers to a biological grouping of microbes---is a common marker of host or environmental health.
For example, the species \textit{G. vaginalis} has been found to correlate with symptomatic bacterial vaginosis \citep{callahan2017replication}; 
different genera of Cyanobacteria flourish in response to precipitation and irrigation run-off 
\citep{tromas2018niche}; and Parkinson's disease has been associated with reduced levels of the family \textit{Prevotellaceae} \citep{hill2017parkinson}. 
Accurate and precise estimation of microbial abundances is critical for  disease diagnosis and treatment \citep{diag2014nature, diag2014, diag20142, diag2015}.

A particularly challenging aspect of estimating microbial abundances is that the category labels of microbes are known to be correlated.  
Microbial communities are spatially organized, 
with a member of one taxon more likely to be observed close to the same taxon than close to a different taxon \citep{welch2016biogeography}.
In this paper we argue that a correlated-taxon model is a natural approach to estimating relative abundances in this setting.
It successfully explains the large number of unobserved taxa in many samples, as well as overdispersion in the abundance of observed taxa relative to models where the occurrences of individual microbes are uncorrelated.

An additional advantage of our method is that it provides a statistical framework for testing for \textit{dysbiosis}.
Dysbiosis describes a microbial imbalance, or a deviation from a healthy microbiome \citep{petersen2014defining, hooks2017dysbiosis}.
In particular, the term is often used to refer to a change in the \textit{stability} of a microbiome.
For example, inflammatory bowel disease (IBD) has been associated with increases in the variability of the gut microbiome \citep{halfvarson2017dynamics}, and the microbiomes of IBD patients are often referred to as dysbiotic \citep{tamboli2004dysbiosis}.
Unlike many methods for modeling relative abundances of microbial taxa, the method that we propose provides a natural framework for hypothesis testing for dysbiosis via the parameters of a heteroskedastic model for taxon abundances.
Specifically, we can test whether the variability in a taxon's counts is associated with some covariate of interest.

Our paper is laid out as follows. In Section~\ref{s:litrev}, we review several existing regression models for microbial abundances.
In Section~\ref{s:meth}, we propose our model, and discuss parameter estimation. 
We propose approaches for testing for differential abundance and differential variability in Section~\ref{s:hyp}. 
In Section~\ref{s:simstudy}, we show via simulation that our hypothesis testing framework is valid, even with small sample sizes. 
We apply our method to data from a soil microbiome study in Section~\ref{s:data}, and we close with a discussion of our method in Section~\ref{s:disc}.
Software for implementing our model and hypothesis testing procedures is available in the \texttt{R} package \texttt{corncob}, available at \texttt{github.com/bryandmartin/corncob}.

\section{Literature Review}\label{s:litrev}

Modeling of population proportions, or \textit{relative abundances}, has a long history in the statistical literature, and includes basic methods such as z-tests for proportions, and logistic regression.
However, modeling microbial abundance data brings with it a number of challenges.
For example, the dynamic nature of the microbiome commonly gives rise to a large number of microbial taxa that are only present in a small number of samples, but are highly abundant when present
\citep{digiulio2015temporal, dethlefsen2011incomplete}.
Some microbes may be so rare that they consistently evade detection or are observed at low abundances in all samples \citep{sogin2006microbial}. 
In addition, 
the number of taxa (typically on the order of thousands) is generally substantially less than the number of samples (typically less than one hundred). 
Finally, the number of counts that are observed in each sample may differ substantially, and thus the amount of information contained in each sample may differ.

Thus, we focus our literature review on models for microbial abundances. 
We broadly categorize these models into two approaches:  jointly modeling multiple taxa, and modeling each taxon individually. 
While our proposal pertains to the latter, both approaches are common and each has its advantages and disadvantages, which we now review.

Jointly modeling multiple taxa is a popular approach because it represents the entire microbial community with a single model. 
However, since these communities are often very diverse (the total number of taxa is large), and different taxa exhibit differing levels of variability,
a large number of parameters is typically needed to obtain a good model fit \citep{kurtz2015sparse, sankaran2017latent}. 
Hierarchical models of absolute abundances are often used to constrain the number of parameters (e.g. \cite{la2012hypothesis, holmes2012dirichlet, chen2013variable, sankaran2017latent, cao2017microbial}). 
However, modeling the variance structure is challenging with few parameters \citep{sankaran2017latent}.
Many joint taxon models make use of the log-ratio or centered log-ratio transformations to model relative abundances.
However, these approaches typically cannot be applied to zero-valued observations \citep{aitchison1986statistical, mcmurdie2014waste, willis2018divnet}.
Since many taxa are typically unobserved in each sample, these methods commonly make use of pseudo-counts to replace zeros, or incorporate a zero-inflation component into their model \citep{xia2013logistic, mandal2015analysis, MZILN, willis2018divnet}.
In the case of pseudo-counts, parameter estimation depends on an arbitrarily chosen hyperparameter, while zero-inflated models may lack interpretability. 

Because simultaneously modeling large numbers of microbial taxa is challenging, an alternative approach is to model individual taxa one-by-one.
We further classify individual taxon models into models for observed relative abundances (the proportion of the observed counts that corresponds to the specific taxon), and models for absolute abundances (the number of observed counts of the taxon). 
A particularly common model for observed relative abundances is the beta distribution, which is a natural choice since it is supported on $(0,1)$. 
Zero-inflated beta regression models have been proposed to account for the large number of zeros often observed in microbial abundance data, corresponding to the absence of a taxon in a sample \citep{ZIB, zibr, chai2018marginalized}.
Non-parametric models for observed relative abundances \citep{white2009statistical, segata2011metagenomic} and Gaussian models for transformed observed relative abundances \citep{morgan2012dysfunction, morgan2015associations} have also been proposed.

Another option is to model the absolute abundance of a taxon.
Popular methods originally designed for RNAseq data, such as DESeq2 \citep{love2014moderated} and EdgeR \citep{robinson2010edger}, make use of the negative binomial distribution. 
These models can be extended with random effects and a zero-inflation component to account for correlation across subjects and to model additional overdispersion of the counts \citep{NBMM, fang2016zero}.
Alternative approaches to modeling absolute abundances include the use of transformations such as cumulative sum scaling \citep{wahba1995smoothing, paulson2013differential}, trimmed mean of M-values \citep{robinson2010scaling, law2014voom}, and ratio approaches \citep{sohn2015robust, chen2018gmpr}.

All of the papers mentioned thus far focus on an association between \textit{mean} abundance and covariates. 
In this paper, we propose a beta-binomial regression model for microbial taxon abundances.
To the best of our knowledge,
this is the first regression model that allows for an association between the \textit{variance} of a taxon's abundance and covariates, rather than only an association between the mean abundance and covariates.
In addition, our model can accommodate the absence of a taxon in samples, variability in the total number of counts across samples, and high variability in the observed relative abundances.

\section{The Beta-Binomial Regression Model}\label{s:meth}

\subsection{A Hierarchical Model for Microbial Abundances} \label{ss:model}

In this section, we present a beta-binomial regression model for microbial abundance data.
While the beta-binomial model has been extensively studied in the statistics literature \citep{skellam1948probability, kleinman1973proportions, williams1975394, prentice1986binary, mccullagh1989generalized, aerts2002topics, dolzhenko2014using, wagner2015importance}, to our knowledge, we are the first to propose a regression framework that can link both discrete and continuous covariates to both a relative abundance parameter and a correlation/overdispersion parameter, as well as the first to apply this model to the analysis of microbial data.
We summarize the notation and definitions defined in this section in Table~\ref{tab:notation}.

\begin{table}[ht!]
\centering
\begin{tabular}{c|c}
\textbf{Notation} & \textbf{Definition}\\\hline
$Y_{i,j}$ & \textit{indicator that the $j^{th}$ read corresponds to the taxon of interest}\\
$W_i$ & {\it observed counts}, or {\it observed absolute abundance}, of the taxon of interest\\
$M_i$ & \textit{sequencing depth}, or \textit{total number of counts}, across all taxa\\
$W_i/M_i$ & \textit{observed relative abundance} of the taxon of interest\\
$Z_i$ & \textit{latent relative abundance} of the taxon of interest\\
$\mu_i$ & \textit{expected relative abundance} of the taxon of interest\\
$\phi_i$ & \textit{overdispersion}, or \textit{within-sample correlation} of the taxon of interest
\end{tabular}
\caption{The notation for the observed random variables, latent random variables, and parameters of our proposed beta-binomial model. The subscript $i$ refers to the $i^{\text{th}}$ sample.}
\label{tab:notation}
\end{table}

Suppose we have $n$ samples of microbial communities, indexed by $i = 1,\ldots,n$.
Let $M_i$ be the \textit{sequencing depth}, or the number of total counts (or \textit{reads}) across all taxa, in the $i^{\text{th}}$ sample.
Let $Y_{i,j}$ for $j=1,\ldots,M_i$ be an indicator that the $j^{\text{th}}$ read corresponds to the  taxon of interest.
Therefore, $W_i = \sum_{j=1}^{M_i}Y_{i,j}$ is the \textit{observed absolute abundance} of the taxon of interest in the  $i^{\text{th}}$ sample.

It is natural to consider the model
\begin{align}
W_i|(Z_i,M_i) &\sim \text{Binomial}(M_i,Z_i),\label{eq:bin}
\end{align}
and to perform inference on $Z_i$, where $Z_i$ is the probability of observing the taxon of interest in the $i^{\text{th}}$ sample.
However, this model is insufficiently flexible to model microbial abundance data.
For example, Figure~\ref{fig:binomvsbb} (left) shows 95\% prediction intervals from a binomial model fit to the relative abundance of a strain of \textit{Rhizobium} in  16 experimental replicates of sampling microbes in soil (see Section~\ref{s:data} for details). 
We see that the data are substantially overdispersed relative to the binomial model, which provides a very poor fit (see \cite{mcmurdie2014waste} for further discussion on overdispersion of microbial abundance data).

\begin{figure}[ht!]
    \centering
        \includegraphics[width = .95\textwidth]{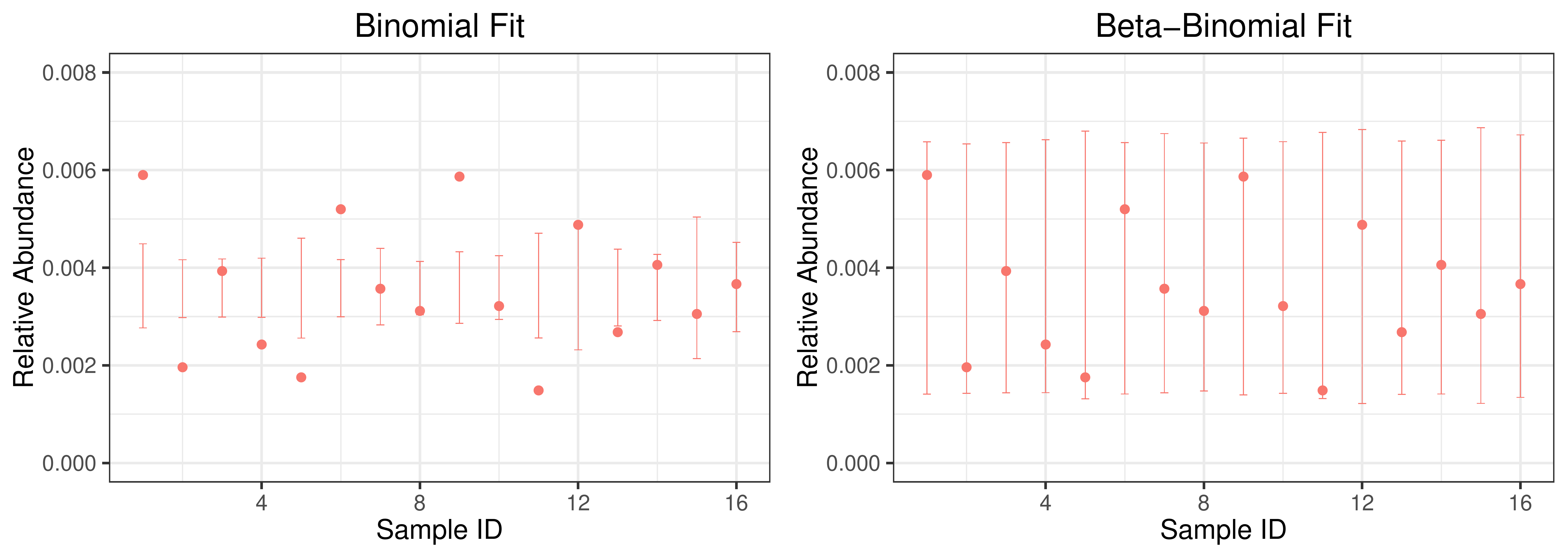}
    \caption{The relative abundance of a strain of \textit{Rhizobium} in 16 biological replicate samples in a soil microbiology study, and 95\% prediction intervals based on a binomial model (left) and  the proposed beta-binomial model (right). The data is clearly overdispersed relative to the binomial model, motivating the development of our beta-binomial model.}
    \label{fig:binomvsbb}
\end{figure}

The overdispersion of the observed relative abundances compared to a binomial model motivates a more flexible model. We propose the following model:
\begin{align}
W_i|(Z_i,M_i) &\sim \text{Binomial}(M_i,Z_i),\label{eq:bb1}\\
Z_i &\sim \text{Beta}(a_{1,i},a_{2,i}), \label{eq:bb2}
 \end{align} 
where $a_{1,i} \in \R_{+}$, $a_{2,i}\in \R_{+}$. 
In the model \eqref{eq:bb1}--\eqref{eq:bb2}, $Z_i$ is itself a random variable, representing the \emph{latent relative abundance} of the taxon.
As we will demonstrate, this hierarchical approach to to modeling relative abundance is a major advantage of our approach. 

Using the parameterization \begin{align} 
\mu_i &=\frac{a_{1,i}}{a_{1,i}+a_{2,i}},\label{eq:mu}
\end{align}
it can be shown that
\begin{align}
\mathbb{E}(W_i|M_i) &= M_i\times \mathbb{E}(Z_i) = M_i\times \mu_i.
\end{align} 
Thus $\mu_i\in (0,1)$ is the \textit{expected relative abundance} of the taxon in the $i^{\text{th}}$ sample. 
In addition, using the parameterization
\begin{align}
\phi_i&=\frac{1}{a_{1,i}+a_{2,i}+1},\label{eq:phi}
\end{align}
it can be shown that
\begin{align}
\text{Var}(W_i|M_i) &= M_i\times \mu_i\times(1-\mu_i)\times(1+(M_i-1)\times\phi_i).
\end{align} 
The multiplicative factor $(1+(M_i-1)\times\phi_i)$ 
is therefore the \textit{overdispersion} of the absolute abundance of the taxon for the $i^{\text{th}}$ sample relative to a binomial random variable.
Furthermore, 
\begin{align}
\text{Corr}(Y_{i,j}, Y_{i,j^*}) &= \phi_i \text{ for } 1\leq j < j^* \leq M_i,
\end{align}
so $\phi_i$ can also be interpreted as the correlation between the taxon indicator variables within the $i^{\text{th}}$ sample \citep{prentice1986binary}.

We then link the expected relative abundance, $\mu_i$, and the overdispersion, $\phi_i$, to covariates.
We define link functions
\begin{align} 
g(\mu_i)&=\beta_0+\bm{X}_{i}^T\boldsymbol{\beta},\label{eq:linkm}\\
h(\phi_i)&=\beta_0^*+\bm{X}^{*T}_{i}\boldsymbol{\beta}^*,\label{eq:linkp}
\end{align}
where $\bm{X}_i$, the $i^{\text{th}}$ row of the covariate matrix $\bm{X}=[X_{ij}]\in \R^{n \times k}$, represents covariates associated with $\mu_i$; $\bm{X}_i^*$, the $i^{\text{th}}$ row of the covariate matrix $\bm{X}^*=[X^*_{ij}]\in \R^{n \times k^*}$, represents covariates associated with $\phi_i$; $\boldsymbol{\beta}=(\beta_1,\ldots,\beta_k)^T$; and $\boldsymbol{\beta}^* = (\beta^*_1,\ldots,\beta_{k^*}^*)^T$.
$\bm{X}$ and $\bm{X}^*$ may be identical, or they may be non- or partially-overlapping.

Throughout this paper, we choose the logit transformation for the link functions in \eqref{eq:linkm} and \eqref{eq:linkp}, so that 
\begin{align*}
    g(x)\equiv h(x) &\defeq \log \paren{\dfrac{x}{1-x}}.
\end{align*}
This link function is convenient as it is a bijection between $[0,1]$ and $\R$.  Other choices for the link functions can be used as well, and the link functions for $\mu_i$ and $\phi_i$ need not be identical.
 
 This hierarchical model has three key advantages over other approaches.
 Firstly, the use of a beta random variable as a model for the binomial probability allows us to incorporate overdispersion.
 Secondly, the overdispersion parameter (rather than just the mean) can be modeled with covariates.
 As we will see in Section~\ref{s:data}, this is a key advantage of our approach.
 Finally, our model makes direct use of the absolute abundance ($W_1,\ldots,W_n$) and the total number of counts ($M_1,\ldots,M_n$), rather than simply transforming these quantities into the observed relative abundance ($W_1/M_1,\ldots,W_n/M_n$), which would amount to throwing away valuable information about the sequencing depth across in each sample. 
 We show the 95\% prediction intervals from a beta-binomial model for the soil microbiology study in Figure~\ref{fig:binomvsbb} (right).

\subsection{Model Fitting}\label{s:est}
Given $n$ samples from the model  \eqref{eq:bb1}--\eqref{eq:bb2}, the log-likelihood is
\begin{align}
    &\log L(\boldsymbol{\theta}|\bm{W},\bm{M})\label{eq:logl}\\
    &\hspace{.5em} = \sumi \log \brack{\bmat M_i\\  W_i\emat \frac{B(a_{1,i}+W_i,\ a_{2,i}+M_i-W_i)}{B(a_{1,i},\ a_{2,i})}}\notag \\
    &\hspace{.5em} =\sumi\log \brack{\bmat M_i\\  W_i\emat \frac{B\paren{ \dfrac{e^{-\beta_0^*-\bm{X}_i^{*T}\boldsymbol{\beta}^*}}{1+e^{-\beta_0-\bm{X}_i^T\boldsymbol{\beta}}}+W_i,\  \dfrac{e^{-\beta_0^*-\bm{X}_i^{*T}\boldsymbol{\beta}^*}}{1+e^{\beta_0+\bm{X}_i^T\boldsymbol{\beta}}}+M_i-W_i}}{B\paren{ \dfrac{e^{-\beta_0^*-\bm{X}_i^{*T}\boldsymbol{\beta}^*}}{1+e^{-\beta_0-\bm{X}_i^T\boldsymbol{\beta}}},\  \dfrac{e^{-\beta_0^*-\bm{X}_i^{*T}\boldsymbol{\beta}^*}}{1+e^{\beta_0+\bm{X}_i^T\boldsymbol{\beta}}}}}}\notag, 
\end{align}
where $\bm{W}\in \R^n$, $\bm{M}\in \R^n$, $\boldsymbol{\beta}\in \R^{k}$, $\boldsymbol{\beta}^* \in \R^{k^*}$, $\boldsymbol{\theta}=(\beta_0, \boldsymbol{\beta}^T, \beta_0^*, \boldsymbol{\beta}^{*T})^T$, and $B(\cdot,\cdot)$ is the Beta function given by $B(x,y) = \int_0^{1} t^{x-1}(1-t)^{y-1}dt$ for $x\in \R$ and $y \in \R_{+}$. 
We fit the model by maximum likelihood using the trust region optimization algorithm \citep{fletcher1987practical, nocedal1999springer, trust}, which has accelerated computation relative to a line search method.

In this iterative algorithm, a ``trust region'' is defined around the parameter estimate at each iteration.
The algorithm then updates the parameter estimate by minimizing a second-order Taylor series expansion of the objective function, subject to the constraint that the solution is within the trust region.
If a proposed update is infeasible (i.e. it is outside of the parameter space), then it is rejected and the trust region shrinks.
The minimization of the objective function then repeats with the new constraint.
If a proposed update is close to the boundary of the trust region, the trust region expands in the next iteration.
We implement the trust algorithm for minimizing the negative log-likelihood using the \texttt{R} package \texttt{trust} \citep{trust}.

The log-likelihood is not concave in $\boldsymbol{\theta}$ (see Appendix~\ref{A:concave}), so trust region optimization does not guarantee convergence to the global minimum of the objective function.
However, under mild conditions, the limit points of the trust algorithm are guaranteed to satisfy the first- and second-order conditions that are necessary for a local minimum \citep{fletcher1987practical, nocedal1999springer}.
We use multiple initializations and select the estimate that has the largest log-likelihood.
In practice, there is little difference in the parameter estimates across initializations.

Each iteration of the trust region optimization algorithm makes use of the gradient and Hessian of \eqref{eq:logl}. 
These are given in Appendix~\ref{A:hessian} for the case of logit link functions for $g(\cdot)$ and $h(\cdot)$ in \eqref{eq:linkm} and \eqref{eq:linkp}.

\section{Hypothesis Testing}\label{s:hyp}

We now discuss inference on $\boldsymbol{\theta}$.
We consider the null hypothesis that $\bm{A}\boldsymbol{\theta}=\bm{b}$, where $\bm{A}\in \R^{r \times (k+k^*+2)}$ has full row rank and $r < k+k^*+2$, $\bm{b} \in \R^{r}$, and where $\boldsymbol{\theta}$ is the parameter vector introduced in \eqref{eq:logl}. 
The Wald test statistic is
\begin{equation}\label{eq:wts}
    \hat{T}_{Wald} = n(\bm{A}\hat{\boldsymbol{\theta}}-\bm{b})^T(\bm{A} \hat{\mathcal{I}}(\hat{\boldsymbol{\theta}})_n^{-1}\bm{A}^T)^{-1}(\bm{A}\hat{\boldsymbol{\theta}}-\bm{b}),
\end{equation}
where 
\begin{equation}\label{eq:theta}
    \hat{\boldsymbol{\theta}} = \argsup_{\boldsymbol{\theta}}\log L(\boldsymbol{\theta}|\bm{W},\bm{M})
\end{equation}  
and  $\hat{\mathcal{I}}(\hat{\boldsymbol{\theta}})_n$ is
the observed Fisher information evaluated at $\hat{\boldsymbol{\theta}}$:
\begin{equation}
   \hat{\mathcal{I}}(\hat{\boldsymbol{\theta}})_n = -\dfrac{1}{n}\sum_{i=1}^{n} \brack{\dfrac{\partial^2}{\partial \boldsymbol{\theta}\partial \boldsymbol{\theta}^T}
    \log L(\boldsymbol{\theta}|\bm{W},\bm{M}) }_{\boldsymbol{\theta}=\hat{\boldsymbol{\theta}}}.
\end{equation}
Under the null hypothesis that $\bm{A}\boldsymbol{\theta}=\bm{b}$, we find empirically that $\hat{T}_{Wald}$ is well-approximated by a $\chi^2_r$ distribution if $n$ is large (Section~\ref{ss:type1}).
Alternatively, we can test $ \bm{A}\boldsymbol{\theta}=\bm{b}$ using a likelihood ratio test statistic, defined as 
\begin{equation}\label{eq:lrts}
 \hat{T}_{LRT} = 2\paren{\log L(\hat{\boldsymbol{\theta}}|\bm{W},\bm{M}) -\log L(\hat{\boldsymbol{\theta}}_0|\bm{W},\bm{M})},
\end{equation}
where 
\begin{equation}\label{eq:nulltheta}
    \hat{\boldsymbol{\theta}}_0 = \argsup_{\boldsymbol{\theta}: \bm{A}\boldsymbol{\theta}=\bm{b}}\log L(\boldsymbol{\theta}|\bm{W},\bm{M}).
\end{equation}
When $n$ is large and $\bm{A}\boldsymbol{\theta}=\bm{b}$, we find that the distribution of 
$\hat{T}_{LRT}$ is well-approximated by a $\chi^2_r$ distribution (Section~\ref{ss:type1}). 

In practice, we often do not have the sample size necessary to use the $\chi^2_r$ approximation.
For this reason, we also implement a parametric bootstrap hypothesis testing procedure.
Our parametric bootstrap Wald testing procedure is given in Algorithm~\ref{alg:pboot}; the parametric bootstrap likelihood ratio test procedure is provided in Appendix \ref{A:pbLRT}.

\begin{algorithm}[ht!]
{\bf Require:} $\bm{W}$, $\bm{M}$, $\bm{X}$, $\bm{X}^*$, a large integer $B$ (e.g. $B=10,000$)
\begin{algorithmic}[1]
\State{Estimate  $\hat{\boldsymbol{\theta}}$ and $\hat{\boldsymbol{\theta}}_0$  as in \eqref{eq:theta} and \eqref{eq:nulltheta}, respectively, with the trust region optimization procedure.}
\State{Compute $\hat{T}_{Wald}$ as in \eqref{eq:wts} using $\bm{A}$, $\bm{b}$, and  $\hat{\boldsymbol{\theta}}$.}
\For{$b=1,\ldots,B$}
\State{Simulate $\tilde{\bm{W}}^b$ with elements $\tilde{W}_i^b$ drawn from a beta-binomial distribution with $M_i$

\noindent\hspace{\algorithmicindent}draws and parameters $\hat{\boldsymbol{\theta}}_0$.}
\State{Estimate $\tilde{\boldsymbol{\theta}}^b$ as in \eqref{eq:theta} using $\tilde{\bm{W}}^b$ and $\bm{M}$ with the trust region optimization procedure.}
\State{Compute $\hat{T}^{b}_{Wald}$ as in \eqref{eq:wts} using $\bm{A}$, $\bm{b}$, and  $\tilde{\boldsymbol{\theta}}^b$.}
 \EndFor
 \State{Calculate the p-value: $$ \hat{p} \leftarrow \dfrac{1}{B+1}\paren{1+\sum_{b=1}^{B}\mathbbm{1}\set{\hat{T}^{b}_{Wald}\geq \hat{T}_{Wald}}}.$$}
 \State{\textbf{return} $\hat{p}$}
\end{algorithmic}
\caption{Parametric Bootstrap Wald Test of $H_0: \ \bm{A}\boldsymbol{\theta}=\bm{b}$}\label{alg:pboot}
\end{algorithm}

For certain realizations of $\bm{W}$, Wald-type inference is uninformative.
\sloppy For example, if $k=k^*=1$, $\bm{X}_{i}=\bm{X}_{i}^* \in \set{0, \ 1}$ for $i=1,\ldots,n$, and $\sum_{i:\ \bm{X}_{i} = 1} W_i = 0$,
 then a parameter estimate diverges to $- \infty$ (see Lemma~\ref{lemma:sub21} in Appendix \ref{A:singularLRT} for details).
This limitation is not unique to our model, and hypothesis testing using Wald tests in the case of complete or quasi-complete separation in logistic regression is known to have the same issue (see \cite{albert1984existence, heinze2002solution, heinze2006comparative} for further discussion). 
In this case, we instead use the likelihood ratio test to test hypotheses about $\boldsymbol{\beta}$, such as $\boldsymbol{\beta} = 0$.
However, in this setting, even the likelihood ratio test does not provide a useful test of certain hypotheses about $\boldsymbol{\beta}^*$, such as $\boldsymbol{\beta}^* = 0$   (see Appendix \ref{A:singularLRT}). 
Since it is often the case that a taxon is unobserved in certain experimental conditions, the default behaviour for our software in this setting is to return a test statistic of zero for Wald-type tests to indicate that inference is uninformative and the null hypothesis should not be rejected.

While \eqref{eq:lrts} and Algorithms~\ref{alg:pboot}--\ref{alg:pbootLRT} hold for any $\bm{A}$ and $\bm{b}$, they require solving \eqref{eq:nulltheta}.
This may be difficult to do for certain $\bm{A}$ and $\bm{b}$. 
In this case, an approximate solution could be obtained by maximizing the likelihood subject to a penalty on $\norm{\bm{A}\boldsymbol{\theta}-\bm{b}}$ (e.g. see \cite{fiacco1968nonlinear, ryan1974penalty}).
Alternatively, approximating the distribution of \eqref{eq:wts} with a $\chi^2_r$ distribution does not require restricted maximum likelihood estimation.

In summary, we implement four hypothesis testing procedures: the Wald test, the likelihood ratio test, the parametric bootstrap Wald test, and the parametric bootstrap likelihood ratio test.
The Wald and likelihood ratio tests permit faster inference than the parametric bootstrap tests.
However, the parametric bootstrap procedures successfully control Type 1 error in small sample sizes.
We now demonstrate the performance of all of these hypothesis testing procedures in simulation.

\section{Simulation Study}\label{s:simstudy}
We now investigate the performance of our approach, which we call \underline{co}unt \underline{r}egressio\underline{n} for \underline{c}orrelated \underline{o}bservations with the \underline{b}eta-binomial, or \texttt{corncob}, under simulation.
We study the Type I error rate and the power when testing for both differential abundance and differential variability.
We generate sequencing depths $\bm{M}\in\R^{n}$ with elements $M_i$ simulated from the empirical distribution of the observed sequencing depths in the data set discussed in Section~\ref{s:data}, which ranges from $7,821$ to $58,655$.
We use sample sizes $n\in \set{10, 30, 100}$ and a binary covariate $\bm{X}_{i}=\bm{X}_{i}^*=0$ for $i=1,\ldots,n/2-1$ and $\bm{X}_{i}=\bm{X}_{i}^*=1$ for $i=n/2,\ldots,n$.
We then simulate absolute abundances $\bm{W}\in \R^{n}$ with elements $W_i$ simulated under the data generating model (described below).
The parameter values were selected by fitting \texttt{corncob} to the genus \textit{Thermomonas} in the data set discussed in Section~\ref{s:data} so that simulated data are similar to what might be observed in a real-world experiment.
For each simulation, we calculate $10,000$ p-values using all four of the hypothesis testing procedures outlined in Section~\ref{s:hyp}: the Wald test, the likelihood ratio test, the parametric bootstrap Wald test, and the parametric bootstrap likelihood ratio test.
We use $1,000$ bootstrap iterations for the parametric bootstrap testing procedures.

\subsection{Type I Error Rate}\label{ss:type1}
We first confirm that \texttt{corncob} controls Type I error at the nominal level.
We generate data using the beta-binomial model with logit link functions for mean and overdispersion, under three settings for $\boldsymbol{\beta}$.
In the first simulation setting, we test the null hypothesis $H_0: (\beta_1, \beta_1^*) = (0,0)$.
We generated model parameters by fitting a model to the genus \textit{Thermomonas} without using soil amendment as a covariate, yielding parameters $(\tilde{\beta}_0, \tilde{\beta}_1, \tilde{\beta}_0^*, \tilde{\beta}_1^*) = (-5.75, 0, -5.24, 0)$.
In the second simulation setting, we test the null hypothesis $H_0: \betost = 0$.
We generated model parameters by fitting a model to the genus \textit{Thermomonas} using soil amendment as a covariate for $\mu_i$, yielding parameters $(\tilde{\beta}_0, \tilde{\beta}_1, \tilde{\beta}_0^*, \tilde{\beta}_1^*) = (-5.36, -1.12, -5.69, 0) $.
In the third simulation setting, we test the null hypothesis $H_0: \beto = 0$.
We generated model parameters by fitting a model to the genus \textit{Thermomonas} using soil amendment as a covariate for $\phi_i$, yielding parameters $(\tilde{\beta}_0, \tilde{\beta}_1, \tilde{\beta}_0^*, \tilde{\beta}_1^*) = (-5.51,0, -5.38,0.70)$.
For all three simulation settings, the null hypotheses are true, so we would expect p-values obtained from testing the null hypotheses to be uniformly distributed.

 The results are shown in Figure~\ref{fig:type1}. 
 For sample sizes of $30$ and $100$, all testing procedures resulted in approximately uniform p-values, and Type I error is controlled.
 This suggests that for this experiment, a sample size of $30$ is sufficient to approximate the distribution of the Wald and likelihood ratio test statistics using a $\chi^2$ distribution.
 
 For a sample size of $10$, only the parametric bootstrap procedures resulted in approximately uniform p-values and successful Type I error control.
 The p-values obtained using the Wald and likelihood ratio tests were anti-conservative, suggesting that for this experiment, a sample size of $10$ is too small to approximate the distribution of the test statistics using a $\chi^2$ distribution.
 Therefore, to obtain reliable inference, we recommend the parametric bootstrap procedure when $n$ is smaller than $30$.

\begin{figure}[ht!]
        \centering
        \includegraphics[width = .95\textwidth]{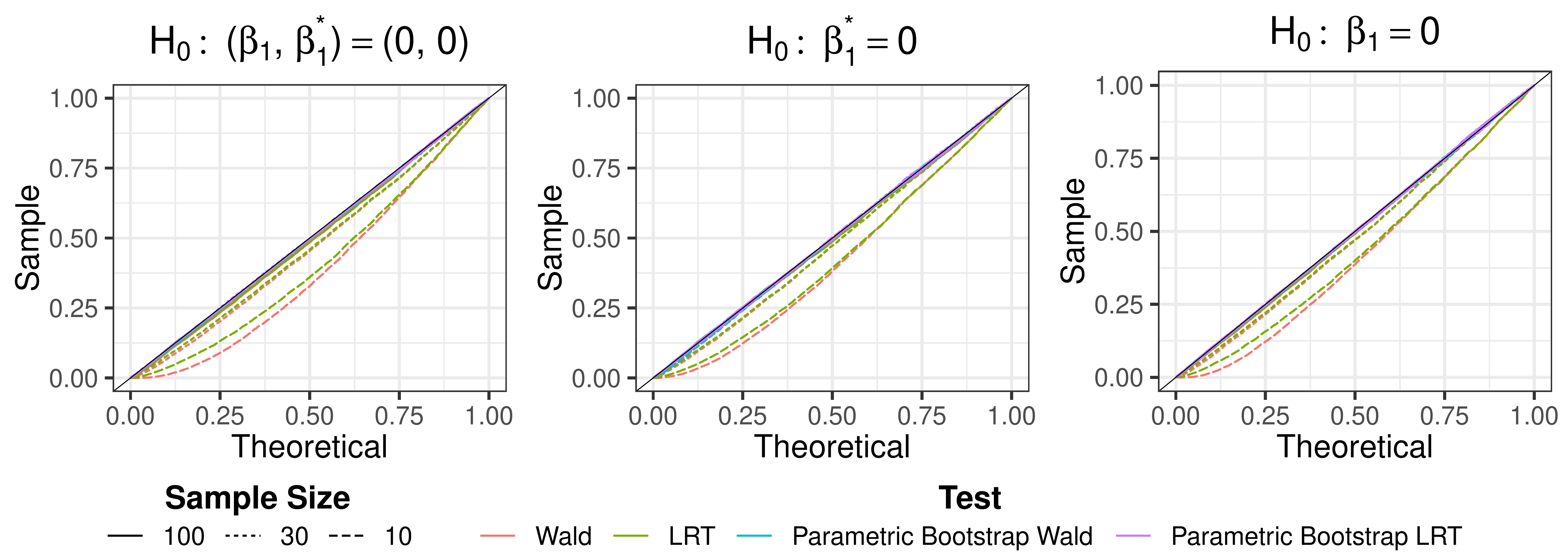}
    \caption{Quantiles of p-values obtained from the Type I error rate simulation settings compared to quantiles of a Uniform$(0,1)$ distribution. We test the null hypotheses $H_0: (\beta_1, \beta_1^*) = (0,0)$ (left), $H_0: \beta_1^* = 0$ (middle), and $H_0: \beta_1 = 0$ (right). A 45-degree line is shown (black). P-values were obtained using Wald (red), likelihood ratio (green), parametric bootstrap Wald (blue), and parametric bootstrap likelihood ratio (purple) tests.
    Sample sizes used were 10 (dashed), 30 (dotted), and 100 (solid).}
    \label{fig:type1}
\end{figure}

\subsection{Power}
We now investigate the power of \texttt{corncob} to reject (i) the null hypothesis $H_0: \ \beta_1 = 0$, as well as (ii) the null hypothesis $H_0: \ \beta_1^* = 0$.
We consider two cases: varying the value of $\beta_1$, and varying the value of $\beta_1^*$. 
For both settings, we generated model parameters by fitting a model to the genus \textit{Thermomonas} using soil amendment as a covariate for $\mu_i$ and $\phi_i$, yielding parameters $(\tilde{\beta}_0, \tilde{\beta}_1, \tilde{\beta}_0^*, \tilde{\beta}_1^*) = (-5.17,        -2.46,        -5.13,        -3.88)$. 
In the first case (Setting 4 in Figure~\ref{fig:power}), we set $(\beta_0, \beta_1, \beta_0^*, \beta_1^*) = (\tilde{\beta}_0, c\tilde{\beta}_1, \tilde{\beta}_0^*, \tilde{\beta}_1^*)$ using  $c \in \set{0, 0.05, \ldots, 1}$.
In the second case (Setting 5 in Figure~\ref{fig:power}), we set $(\beta_0, \beta_1, \beta_0^*, \beta_1^*) = (\tilde{\beta}_0, \tilde{\beta}_1, \tilde{\beta}_0^*, c\tilde{\beta}_1^*)$ using $c \in \set{0, 0.05, \ldots, 1}$.

The results of the power analyses are shown in Figure~\ref{fig:power}.
For both null hypotheses, all sample sizes, and all hypothesis testing procedures, the power increases as both the sample size and the magnitude of the coefficient being tested increases.
For sample sizes of $30$ and $100$, there is little difference in power across the four testing procedures.
This is not surprising, given that in the simulations in Section~\ref{ss:type1}, all procedures performed similarly with sample sizes of $30$ and $100$.
We do not show results for the procedures that rely on the asymptotic distribution of the test statistics for $n=10$, as we saw in Section~\ref{ss:type1} that these procedures did not properly control Type I error.

\begin{figure}[ht!]
        \centering
        \includegraphics[width = .95\textwidth]{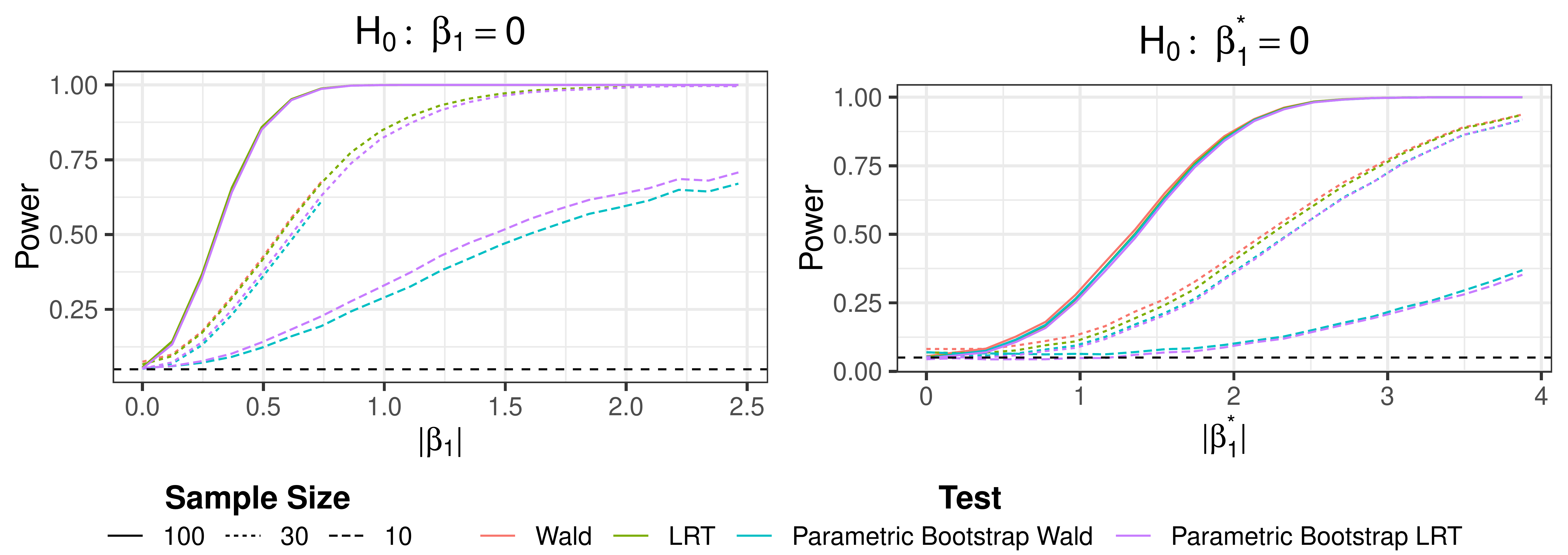}
    \caption{Power curves of p-values obtained from the power simulations. Setting 4 (left) tests $H_0: \beta_1 =0$. Setting 5 (right) tests $H_0: \beta_1^* = 0$.
    A horizontal dashed line is shown at $0.05$. P-values were obtained using Wald (red), likelihood ratio (green), parametric bootstrap Wald (blue), and parametric bootstrap likelihood ratio (purple) tests.
    Sample sizes used were 10 (dashed), 30 (dotted), and 100 (solid).}
    \label{fig:power}
\end{figure}

\section{Application to Soil Data}\label{s:data}

We now consider a study of the association between soil treatments and soil microbiome composition \citep{whitman2016dynamics}.
In this experiment, there are three groups of soil treatments: no additions, biochar additions, and fresh biomass additions.
For each treatment group, multiple experimental replicates were taken at three time points: on the first day, after 12 days, and after 82 days.
The data include  $n=119$ samples with sequencing depths ranging from $8,830$ to $194,356$.
After quality control (as described in \cite{whitman2016dynamics}), a total of $7,770$ operational taxonomic units were identified using the UPARSE workflow \citep{edgar2013uparse}, and taxonomy was assigned using reference databases.
Using the assigned taxonomy, we aggregated counts to the genus level, giving $241$ genera.

We are interested in applying our method to compare the microbiome of soil with no additions after 82 days $(n=15)$ to the microbiome of soil with biochar additions after 82 days $(n=16)$.
We remove $13$ genera for which the total number of counts in these $31$ samples is zero.
We apply \texttt{corncob} using soil addition as a covariate for $\mu_i$ and $\phi_i$ as in \eqref{eq:linkm} and \eqref{eq:linkp}.
We calculate p-values using the parametric bootstrap likelihood ratio test (Algorithm~\ref{alg:pbootLRT}) with $B=10^6$ bootstrap iterations.
We compare the results of \texttt{corncob} to those from DESeq2 \citep{love2014moderated}, EdgeR \citep{robinson2010edger}, metagenomeSeq \citep{paulson2013differential}, and a zero-inflated beta (ZIB) regression model \citep{peng2016zero}.

\subsection{Detection of Differential Abundance}\label{ss:DA}

We first compare p-values obtained from testing for differential abundance across soil addition group.
Roughly speaking, each of the approaches tests for a difference in abundance of a single taxon across conditions, although the details of the model used vary across methods.
In the context of \texttt{corncob}, testing for differential abundance amounts to testing the null hypothesis $H_0: \ \boldsymbol{\beta} = 0$, using the notation defined in \eqref{eq:linkm}.
Scatter plots of the negative log-10 p-values for each approach are shown in Figure~\ref{fig:DA}.

\begin{figure}[ht!]
    \centering
        \includegraphics[width = .98\textwidth]{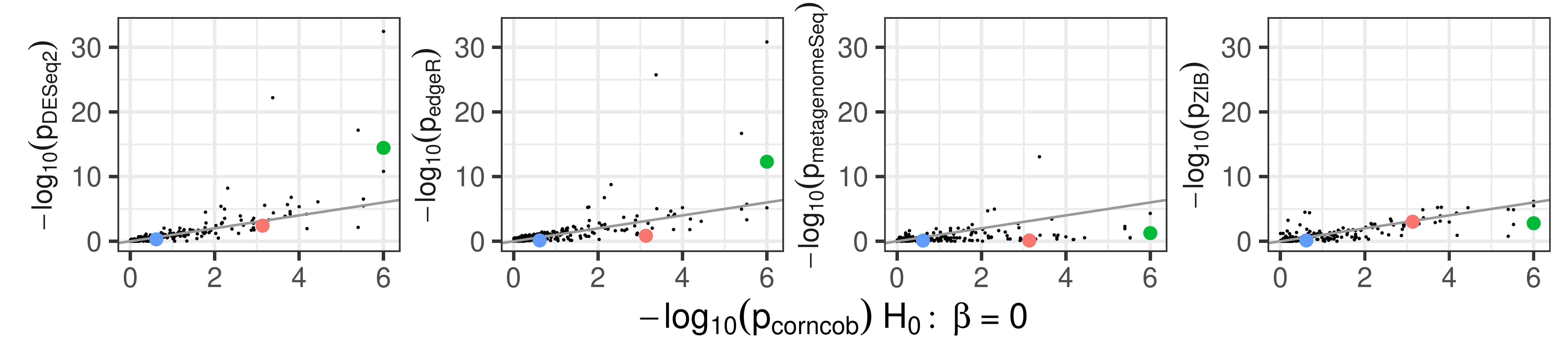}
    \caption{The negative log-10 p-values obtained by testing for differential abundance using \texttt{corncob} ($H_0: \boldsymbol{\beta} = 0$) compared to those from DESeq2 (left-most, Spearman's correlation coefficient $\rho  = 0.854$), EdgeR (middle-left, $\rho = 0.783$), metagenomeSeq (middle-right, $\rho =   0.552$), and ZIB (right-most, $\rho = 0.705$). A 45-degree line is shown. We see that the p-values are on a similar scale overall. \textit{Thermomonas} (green), \textit{Flavisolibacter} (red), and \textit{Myxococcus} (blue) are further examined in Figure~\ref{fig:dvex}. }
    \label{fig:DA}
\end{figure}

Overall, as p-values calculated using \texttt{corncob} decrease, so do those calculated using other approaches.
We observe moderate to strong correlations across the different approaches, with Spearman's correlation coefficients 
between the p-values obtained from \texttt{corncob} ($H_0: \ \boldsymbol{\beta}=0$) and DESeq2, edgeR, metagenomeSeq, and ZIB, respectively, of 0.854, 0.783, 0.552, 0.705.
\texttt{corncob} calculated a lower p-value for 53.9\%, 43.6\%, 63.8\%, and 58.3\% of genera compared to DESeq2, edgeR, metagenomeSeq, and ZIB, respectively. 
Median p-values across all genera for \texttt{corncob}, DESeq2, edgeR, metagenomeSeq, and ZIB are 0.273, 0.318, 0.297, 0.491, and 0.320, respectively.
Therefore, while the p-values produced by \texttt{corncob} are on a similar scale to the other approaches, they may be higher or lower for any given taxon.
 While each of the approaches uses a different model and makes use of a different test statistic, they are all testing for some difference in the mean abundance of the taxon across the soil addition.
 Thus, it is unsurprising that the p-values are similar across the approaches.

\subsection{Detection of Differential Variability}\label{ss:DV}

We now test for differences in the variability of the abundance of a single taxon across conditions, which we refer to as \textit{differential variability}. 
Using \texttt{corncob} and the notation in \eqref{eq:linkm}--\eqref{eq:linkp}, this amounts to testing the null hypothesis $H_0:\ \boldsymbol{\beta}^* =0$.
As far as we know, \texttt{corncob} is the only approach that explicitly tests for differential variability.
Thus, in this section, we investigate whether testing for differential variability allows us to identify new genera beyond what we identify when testing only for differential abundance. 

We compare the results of testing for differential variability to the results of testing for differential abundance using the methods investigated in Section~\ref{ss:DA}.
Figure~\ref{fig:DV} shows scatter plots of the negative log-10 transformations of the p-values for testing differential abundance from DESeq2, metagenomeSeq, ZIB, and \texttt{corncob} against the p-values for testing differential variability with \texttt{corncob}.
We see from Figure~\ref{fig:DV} that there is only a weak association between the p-values for differential variability obtained using \texttt{corncob} and the p-values for differential abundance obtained using the other approaches.
In particular, Spearman's correlation coefficients are 0.127, 0.234, 0.132, 0.215, and 0.362 between \texttt{corncob} p-values for $H_0: \ \boldsymbol{\beta}^*=0$ and p-values from DESeq2, edgeR, metagenomeSeq, ZIB, and \texttt{corncob} for $H_0: \ \boldsymbol{\beta}=0$, respectively.
We omit from Figure~\ref{fig:DV} the scatter plot comparing the \texttt{corncob} p-values for $H_0:\ \boldsymbol{\beta}^*=0$ to the edgeR p-values because the p-values from edgeR are similar to those from DESeq2.
We conclude that applying \texttt{corncob} to test $H_0: \ \boldsymbol{\beta}^*=0$ leads to the discovery of a very different set of genera than those discovered by applying \texttt{corncob} or other approaches to test for differential abundance.

\begin{figure}[ht!]
    \centering
        \includegraphics[width = .98\textwidth]{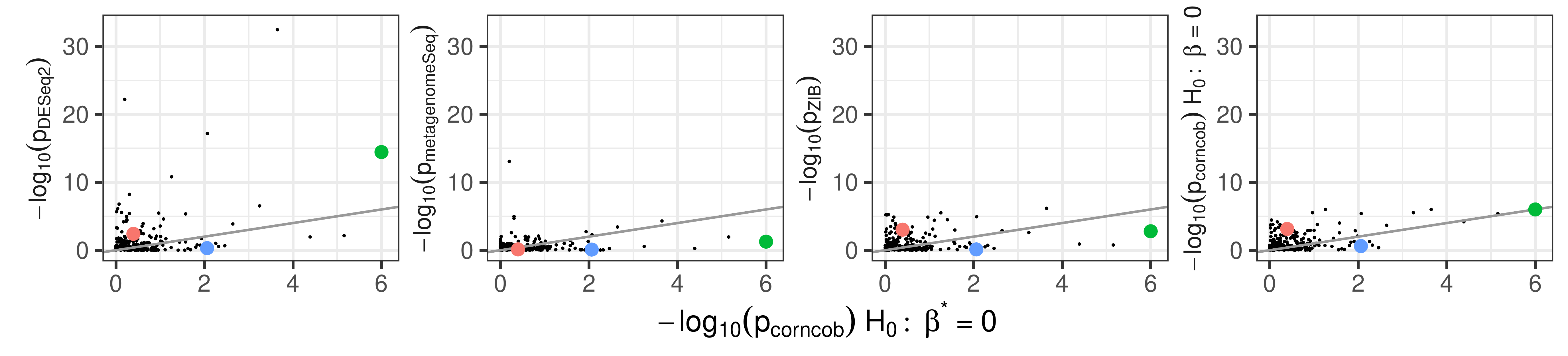}
    \caption{The negative log-10 p-values obtained by testing for differential variability using \texttt{corncob} ($H_0:\ \boldsymbol{\beta}^*=0$) compared to the negative log-10 p-values obtained by testing for differential abundance using DESeq2 (left-most, Spearman's correlation coefficient $\rho = 0.127$),  metagenomeSeq (middle-left, $\rho = 0.132$), ZIB (middle-right, $\rho =  0.215$), and \texttt{corncob} ($H_0:\ \boldsymbol{\beta}=0$) (right-most, $\rho =  0.362$). A 45-degree line is shown.  \textit{Thermomonas} (green), \textit{Flavisolibacter} (red), and \textit{Myxococcus} (blue) are further examined in Figure~\ref{fig:dvex}. We omit a scatter plot showing p-values for edgeR ($\rho = 0.234$); results are similar to DESeq2.}
    \label{fig:DV}
\end{figure}

To obtain greater insight into the results shown in Figure~\ref{fig:DV}, we consider the 3 highlighted genera, which we further investigate in Figure~\ref{fig:dvex}. 
The first, \textit{Thermomonas}, has small p-values for both differential abundance ($p=1.00\times 10^{-6}$) and differential variability ($p=1.00\times 10^{-6}$) using \texttt{corncob}. 
The second, \textit{Flavisolibacter}, has a small p-value for differential abundance ($p=7.44\times 10^{-4}$) and a large p-value for differential variability ($p=0.404$).
The third, \textit{Myxococcus}, has a large p-value for differential abundance ($p=0.244$) and a small p-value for differential variability ($p=8.83\times 10^{-3}$), so it would not be identified using the competing approaches (see Figure~\ref{fig:dvex} for p-values from all approaches).
Figure~\ref{fig:dvex} indicates a clear visual difference between genera that are  identified as differentially abundant but not differentially variable,  differentially variable but not differentially abundant, and both differentially abundant and differentially variable. 
Researchers can use \texttt{corncob} to distinguish between these three possibilities.

\begin{figure}[ht!]
    \centering
        \includegraphics[width = .95\textwidth]{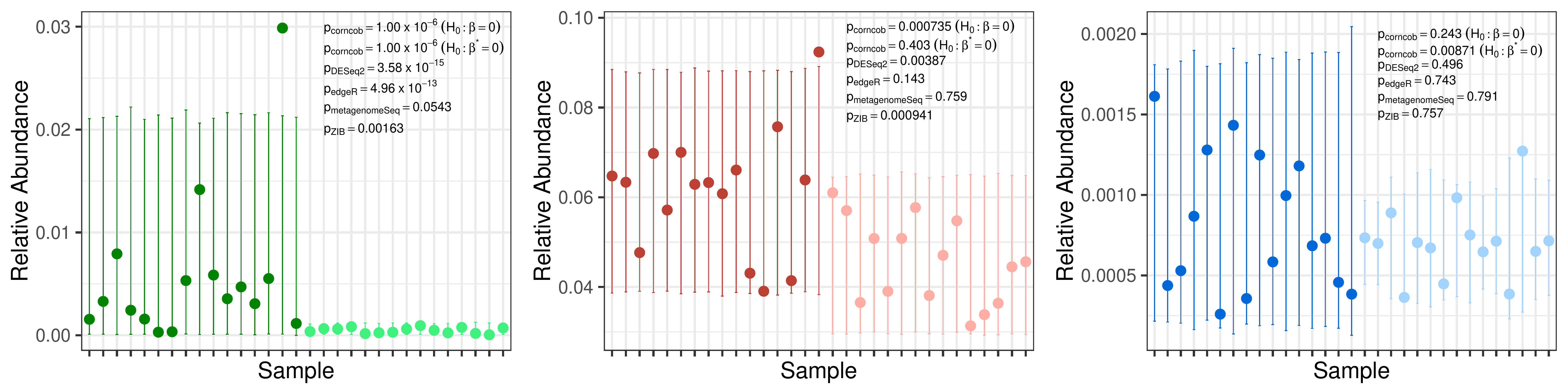}
    \caption{The observed relative abundances of the genera \textit{Thermomonas} (left), \textit{Flavisolibacter} (middle), and \textit{Myxococcus} (right) in 31 soil samples.
    Each of these genera is highlighted in each panel of Figures~\ref{fig:DA} and \ref{fig:DV}. In each panel, the first 16 samples correspond to the biochar additions group (darker color), and the remaining 15 samples correspond to the no additions group (lighter color). 95\% prediction intervals for the relative abundances from a \texttt{corncob} fit using soil addition as a covariate for $\mu_i$ and $\phi_i$ are shown. Using \texttt{corncob} to test $H_0:\  \boldsymbol{\beta} = 0 $ and $H_0:\  \boldsymbol{\beta}^* = 0 $ indicates that \textit{Thermomonas} is both differentially abundant ($p=1.00\times 10^{-6}$) and differentially variable ($p=1.00\times 10^{-6}$), \textit{Flavisolibacter} is differentially abundant ($p=7.44\times 10^{-4}$) and not differentially variable ($p=0.404$), and \textit{Myxococcus} is differentially variable ($p=8.83 \times 10^{-3}$) and not differentially abundant ($p=0.244$).}
    \label{fig:dvex}
\end{figure}

In practice, a data analyst will apply a multiple testing procedure to adjust the p-values for multiple comparisons, so we also investigative the number of genera identified as either differentially abundant or differentially variable after applying the Benjamini-Hochberg procedure \citep{benjamini1995controlling} to the p-values obtained using \texttt{corncob} to test $H_0: \ \boldsymbol{\beta}=0$ and $H_0: \ \boldsymbol{\beta}^*=0$.
The results are shown in Figure~\ref{fig:fdr}.
We see that for a given false discovery rate, in this data set we detect more genera as being differentially abundant than differentially variable; this can also be seen in the right-most panel of Figure~\ref{fig:DV}.
All code for performing this analysis is available in the supplementary materials available at \url{github.com/bryandmartin/corncob_supplementary}.

\begin{figure}[ht!]
        \centering
        \includegraphics[width = .7\textwidth]{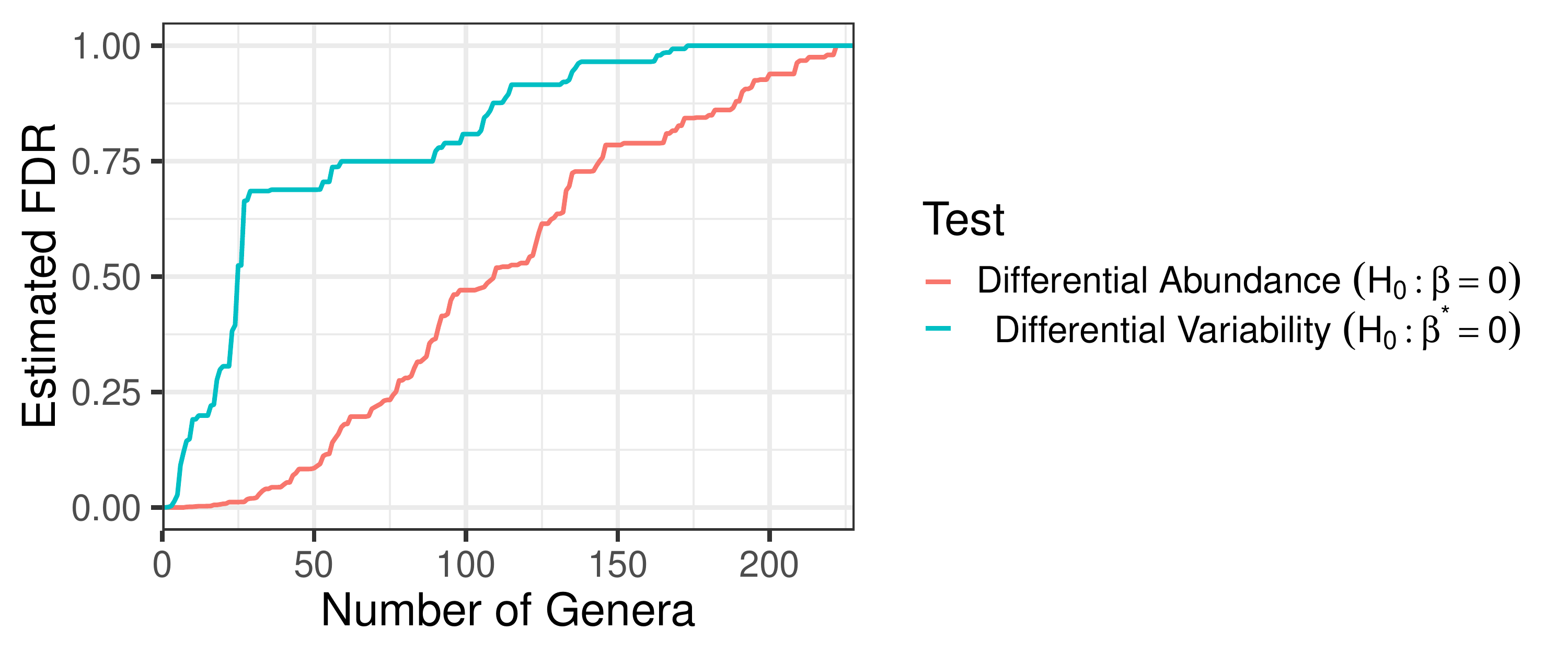}
    \caption{The estimated false discovery rate using the Benjamini-Hochberg procedure, as a function of the number of genera identified as differentially abundant and differentially variable. For a given false discovery rate, we identify fewer genera that are differentially variable than differentially abundant.}
    \label{fig:fdr}
\end{figure}

\section{Discussion}\label{s:disc}

In this paper, we have proposed a beta-binomial regression model for abundance data.
Our model extends existing beta-binomial models by allowing discrete and continuous covariates to be linked to both a relative abundance parameter and an overdispersion parameter. 
Our method is particularly well-suited to modeling microbial abundance data for a number of reasons.
First, microbial taxa are commonly unobserved in many samples.
For example, in the data set examined in Section~\ref{s:data}, 34\% of absolute abundances were zero.
Our model can accommodate this without requiring a zero-inflation component or pseudo-counts.
Second, studies of microbial populations often have small sample sizes.
Our simulation study in Section~\ref{s:simstudy} suggests that our parametric bootstrap inference methods (Algorithms~\ref{alg:pboot} and \ref{alg:pbootLRT}) give valid inference even with small samples. 
Third, the interpretation of $\mu_i$ as the expected relative abundance and of $\phi_i$ as the within-sample correlation of taxon labels (i.e. $\phi_i=\text{Corr}(Y_{ij}=Y_{ij'})$, see Section~\ref{s:meth}) are intuitive and complement ecological theory \citep{welch2016biogeography}.
Finally, regression models for contrasting microbial populations commonly focus on differential abundance. 
By conducting inference about $\phi_i$, our model is also able to identify differences in microbial populations associated with differential variability.

Many studies (e.g. see \cite{gerber2014dynamic, faust2015metagenomics, zhou2015longitudinal}, among others) employ a longitudinal design to investigate the dynamics of microbial populations over time.
To accommodate this setting, future work could incorporate random effects into \eqref{eq:linkm} and \eqref{eq:linkp}.

Our proposed approach models a single taxon's abundance.
A limitation of this approach is that it does not enforce the compositionality constraint (i.e. the estimated expected relative abundances need not sum to $1$ across all microbes in the population).
Future work could consider a multivariate extension of our approach to enforce the compositionality constraint or incorporate between-taxon correlations.

All methods proposed in this paper are implemented in an \texttt{R} package available at \url{github.com/bryandmartin/corncob}. 
Code to reproduce all simulations and data analyses are available at \url{github.com/bryandmartin/corncob_supplementary}.

\bibliography{bib}

\appendix
\section{Analytic Expressions  for the Gradient and Hessian}\label{A:hessian}
Let $\gamma_i = \frac{\phi_i}{1-\phi_i}$ for all $i$, and define $\psi(x)=\int_0^\infty\paren{\frac{e^{-t}}{t}-\frac{e^{-xt}}{1-e^{-t}}}dt$ for $x\in \R_+$ to be the digamma function, the derivative of the logarithm of the gamma function.
Define $\bm{Z}_i = \bmat 1 & \bm{X}_i \emat$ and $\bm{Z}_i^* = \bmat 1 & \bm{X}_i^* \emat$ to be the design matrices for covariates associated with $\mu_i$ and $\phi_i$, respectively, including intercept terms.
Then the expression for the gradient of \eqref{eq:logl} is given by
\begin{align}
    \dfrac{\partial \log L(\boldsymbol{\theta}|\bm{W},\bm{M})}{\partial \boldsymbol{\beta}}&= \sumi \bigg\{\gamma_i^{-1}\mu_i(1-\mu_i)\mathbf{Z}_{i} \bigg[\psi\paren{\dfrac{1-\mu_i}{\gamma_i}}\label{eq:grad1}\\
    &\tab -\psi\paren{M_i+\dfrac{1-\mu_i-W_i\gamma_i}{\gamma_i}} \notag \\
    &\tab+ \psi\paren{W_i + \dfrac{\mu_i}{\gamma_i}}-\psi\paren{\dfrac{\mu_i}{\gamma_i}}\bigg]\bigg\},\notag\\
    \dfrac{\partial \log L(\boldsymbol{\theta}|\bm{W},\bm{M})}{\partial \boldsymbol{\beta}^*}&= \sumi \bigg\{\gamma_i^{-1}\mathbf{Z}^*_{i} \bigg[\psi\paren{M_i+\dfrac{1}{\gamma_i}}- \psi\paren{\dfrac{1}{\gamma_i}}+(\mu_i-1) \label{eq:grad2} \\
    &\tab \bigg(\psi\paren{M_i+\dfrac{1-\mu_i-W_i\gamma_i}{\gamma_i}}-\psi\paren{\dfrac{1-\mu_i}{\gamma_i}}\bigg) \notag \\
    &\tab +\mu_i \paren{\psi\paren{\dfrac{\mu_i}{\gamma_i}}-\psi\paren{W_i+\dfrac{\mu_i}{\gamma_i}}}\bigg]\bigg\} \notag.
\end{align}

Let $\psio(x)=\dfrac{\partial}{\partial x} \psi(x)$ be the trigamma function.
Define $\bm{Y}_i = \bmat \bm{Z}_i^T &  \bm{0}\emat^T\in \R^{k+k^*+2}$ and $\bm{Y}_i^* = \bmat\bm{0} &  \bm{Z}_i^{*T}\emat^T \in \R^{k+k^*+2}$.
Then the expression for the Hessian of \eqref{eq:logl}, $\bm{H}$, is given by
\bal 
\bm{H} = \sum_{i=1}^{n} &\bigg[c_{1,i} \mu_i^2 (1-\mu_i)^2 \bm{Y}_i \bm{Y}_i^T + c_{2,i} \left(\mu_i (1-\mu_i) \bm{Y}_i \gamma_i \bm{Y}_i^{*T}\right.\\
&\tab \left.+ \gamma_i \bm{Y}_i^*  \mu_i(1-\mu_i)\bm{Y}_i^T\right) + c_{3,i} \paren{\gamma_i \bm{Y}_i^* \gamma_i \bm{Y}_i^{*T}}\\
&\tab + c_{4,i} \paren{\mu_i (1-\mu_i)(1-2\mu_i) \bm{Y}_i \bm{Y}_i^T  } +  c_{5,i} \paren{\gamma_i\bm{Y}_i^*  \bm{Y}_i^{*T} }\bigg],
\eal 
where
\bal 
c_{1,i} &= \Big[\psio\paren{M_i+(1-\mu_i-W_i\gamma_i)/\gamma_i}-\psio\paren{(1-\mu_i)/\gamma_i} \\
&\tab + \psio\paren{W_i + \mu_i/\gamma_i}-\psio\paren{\mu_i/\gamma_i}\Big]\gamma_i^{-2},\\
c_{2,i} &= \Big[ \gamma_i(\psi(M_i-(\mu_i+W_i\gamma_i -1)/\gamma_i) - \psi((1-\mu_i)/\gamma_i))+ \gamma_i(\psi(\mu_i/\gamma_i)\\
&\tab - \psi(\mu_i/\gamma_i + W_i))+ (\mu_i-1) (\psio((1-\mu_i)/\gamma_i) \\
&\tab  - \psio(M_i-(\mu_i+W_i\gamma_i-1)/\gamma_i)) +\psio(\mu_i/\gamma_i)\\ 
&\tab  - \psio(\mu_i/\gamma_i+W_i)\Big]\gamma_i^{-3},\\
c_{3,i} &= \Big[ 2\gamma_i\psi(1/\gamma_i)+ \psio(1/\gamma_i) -2\gamma_i \psi(M_i+1/\gamma_i) - \psio(M_i+1/\gamma_i)\\ 
&\tab +(\mu_i-1)^2 \psio(M_i-(\mu_i+W_i\gamma_i -1)/\gamma_i) \\
&\tab - 2\gamma_i(\mu_i-1)\psi(M_i - (\mu_i+W_i\gamma_i -1)/\gamma_i)-\mu_i^2 \psio(\mu_i/\gamma_i)\\
&\tab +\mu_i^2\psio(\mu_i/\gamma_i+W_i) -(\mu_i-1)^2 \psio((1-\mu_i)/\gamma_i)  \\
&\tab + 2\gamma_i(\mu_i-1)\psi((1-\mu_i)/\gamma_i)-2\gamma_i\mu_i \psi(\mu_i/\gamma_i)\\
&\tab + 2\gamma_i\mu_i\psi(\mu_i/\gamma_i +W_i) \Big]\gamma_i^{-4},\\
c_{4,i} &= \Big[\psi((1-\mu_i)/\gamma_i) - \psi(M_i-(\mu_i +W_i\gamma_i -1)/\gamma_i) + \psi(\mu_i/\gamma_i +W_i)\\
&\tab - \psi(\mu_i/\gamma_i) \Big]\gamma_i^{-1},\\
c_{5,i} &= \Big[\psi(M_i+1/\gamma_i) - \psi(1/\gamma_i) + \mu_i (\psi(\mu_i/\gamma_i)-\psi(\mu_i/\gamma_i + W_i))\\
&\tab +(\mu_i-1)(\psi(M_i-(\mu_i+W_i\gamma_i-1)/\gamma_i) - \psi((1-\mu_i)/\gamma_i))\Big]\gamma_i^{-2}.
\eal 

\section{Non-Concavity of the Beta-Binomial Log-likelihood}\label{A:concave}

We show that \eqref{eq:logl} is not guaranteed to be concave in $\boldsymbol{\theta}$. 
Let $n=1$, $\bm{W} \equiv W_1 = 15$, and $\bm{M} \equiv M_1 = 2000$. 
Suppose further that $\boldsymbol{\theta}=(\beta_0, \beta_0^*)^T$. 
Let $\boldsymbol{\theta}^1 = \bmat -3 & -5 \emat^T$ and $\boldsymbol{\theta}^2 = \bmat -1 & -5 \emat^T$.
Then
\bal 
\log L(\boldsymbol{\theta}^1|W_1, M_1) &= -8.481,\\
\log L(\boldsymbol{\theta}^2|W_1, M_1) &= -9.816,\\
\log L(0.5\boldsymbol{\theta}^1 + 0.5\boldsymbol{\theta}^2|W_1, M_1) &= -9.251.
\eal 
Therefore there exists $\boldsymbol{\theta}^1, $ and $\boldsymbol{\theta}^2$ such that
$$\log L(0.5\boldsymbol{\theta}^1 + 0.5\boldsymbol{\theta}^2|\bm{W}, \bm{M})< 0.5 \log L(\boldsymbol{\theta}^1|\bm{W}, \bm{M}) + 0.5 \log L(\boldsymbol{\theta}^2|\bm{W}, \bm{M}),  $$
which establishes that \eqref{eq:logl} is not concave in $\boldsymbol{\theta}$.

\section{Parametric Bootstrap Likelihood Ratio Test}\label{A:pbLRT}

We present Algorithm~\ref{alg:pbootLRT} to conduct a parametric bootstrap likelihood ratio test.

\begin{algorithm}[H]
{\bf Require:} $\bm{W}$, $\bm{M}$, $\mathbf{X}$, $\mathbf{X}^*$, a large integer $B$ (e.g. $B=10,000$)
\begin{algorithmic}[1]
\State{Estimate  $\hat{\boldsymbol{\theta}}$ and $\hat{\boldsymbol{\theta}}_0$  as in \eqref{eq:theta} and \eqref{eq:nulltheta}, respectively, with the trust region optimization procedure.}
\State{Compute $\hat{T}_{LRT}$ as in \eqref{eq:lrts} using $\bm{W}$, $\bm{M}$,  $\hat{\boldsymbol{\theta}}$, and $\hat{\boldsymbol{\theta}}_0$.}
\For{$b=1,\ldots,B$}
\State{Simulate $\tilde{\bm{W}}^b$ with elements $\tilde{W}_i^b$ drawn from a beta-binomial distribution with $M_i$

\noindent\hspace{\algorithmicindent}draws and parameters $\hat{\boldsymbol{\theta}}_0$.}
\State{Estimate $\tilde{\boldsymbol{\theta}}^b$ as in \eqref{eq:theta} using $\tilde{\bm{W}}^b$ and $\bm{M}$ with the trust region optimization procedure.}
\State{Estimate $\tilde{\boldsymbol{\theta}}_0^b$ as in \eqref{eq:nulltheta} using $\tilde{\bm{W}}^b$ and the trust region optimization procedure.}
\State{Compute $\hat{T}^{b}_{LRT}$ as in \eqref{eq:lrts} using $\tilde{\bm{W}}^b$, $\bm{M}$, $\tilde{\boldsymbol{\theta}}^b$, and $\tilde{\boldsymbol{\theta}}_0^b$.}
 \EndFor
 \State{Calculate the p-value: $$ \hat{p} \leftarrow \dfrac{1}{B+1}\paren{1+\sum_{b=1}^{B}\mathbbm{1}\set{\hat{T}^{b}_{LRT}\geq \hat{T}_{LRT}}}.$$}
 \State{\textbf{return} $\hat{p}$}
\end{algorithmic}
\caption{Parametric Bootstrap Likelihood Ratio Test of $H_0: \ \bm{A}\boldsymbol{\theta}=\bm{b}$}\label{alg:pbootLRT}
\end{algorithm}

\section{Likelihood Ratio Testing with a Zero-Count Group}\label{A:singularLRT}

We prove that testing the null hypothesis $H_0: \ \boldsymbol{\beta}^* = 0$ results in a test statistic of zero under certain conditions.
We first prove in Lemma~\ref{lemma:sub21} that the log-likelihood of the model \eqref{eq:bb1}--\eqref{eq:bb2} is equal to zero under certain conditions.
We use this to prove our main claim in Theorem~\ref{theorem:2}.

\begin{lemma}\label{lemma:sub21}
Consider the model \eqref{eq:bb1}--\eqref{eq:bb2} with parameters as in \eqref{eq:mu}--\eqref{eq:phi} and link functions as in  \eqref{eq:linkm}--\eqref{eq:linkp} in the simplified setting with no covariates for $\mu_i$, so that $\boldsymbol{\theta} =(\beta_0, \beta_0^*, \boldsymbol{\beta}^{*T})^T$.
Suppose that $\sum_i W_i = 0$. 
Then
$$\sup_{\beta_0} \log L(\boldsymbol{\theta}|\bm{W},\bm{M})=0.$$
\end{lemma}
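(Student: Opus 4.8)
The plan is to reduce everything to the observation that when $\sum_i W_i = 0$ every read misses the taxon, so the likelihood is a product of ``miss'' probabilities, each of which can be driven to $1$ by sending the expected relative abundance to zero. Concretely, first I would note that the $W_i$ are nonnegative integers, so $\sumi W_i = 0$ forces $W_i = 0$ for all $i$. Substituting into \eqref{eq:logl} and using $\binom{M_i}{0} = 1$ gives
\[
\log L(\boldsymbol{\theta}\mid\bm{W},\bm{M}) \;=\; \sumi \log \frac{B(a_{1,i},\, a_{2,i}+M_i)}{B(a_{1,i},\, a_{2,i})},
\]
where, since there are no covariates for $\mu_i$ and the link is logit, $\mu_i = (1+e^{-\beta_0})^{-1}$ for every $i$, and $a_{1,i} = \mu_i/\gamma_i$, $a_{2,i} = (1-\mu_i)/\gamma_i$ with $\gamma_i = \phi_i/(1-\phi_i)$ a positive constant determined entirely by the fixed parameters $(\beta_0^*,\boldsymbol{\beta}^*)$.

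Next I would establish the easy half, $\sup_{\beta_0}\log L \le 0$. Each summand is $\log$ of $P(W_i = 0 \mid M_i)$, which equals the telescoping product $\prod_{j=0}^{M_i-1}(a_{2,i}+j)/(a_{1,i}+a_{2,i}+j)$; since $a_{1,i} > 0$ each factor lies in $(0,1)$, so the product lies in $(0,1)$ and its logarithm is negative. Hence $\log L < 0$ for every finite $\beta_0$, and in particular the supremum over $\beta_0$ is at most $0$.

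Finally I would show the supremum equals $0$ by letting $\beta_0 \to -\infty$. Then $\mu_i \to 0$, hence $a_{1,i} = \mu_i/\gamma_i \to 0$, while $a_{2,i} = (1-\mu_i)/\gamma_i \to 1/\gamma_i$, a finite positive number because $(\beta_0^*,\boldsymbol{\beta}^*)$ are held fixed. By continuity of $\Gamma$ on $(0,\infty)$ (or directly, since each factor $(a_{2,i}+j)/(a_{1,i}+a_{2,i}+j)\to 1$ and there are only finitely many of them), the $i$th ratio $B(a_{1,i},a_{2,i}+M_i)/B(a_{1,i},a_{2,i})\to 1$, so each summand tends to $0$; summing the $n$ terms, $\log L \to 0$ along this path. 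Combined with the previous step, $\sup_{\beta_0}\log L(\boldsymbol{\theta}\mid\bm{W},\bm{M}) = 0$ (the value is a supremum, not a maximum).

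I do not anticipate a real obstacle. The only point needing a little care is bookkeeping of what varies: in the supremum only $\beta_0$ moves, and it matters that $a_{2,i}$ stays bounded away from $0$ and $\infty$ as $\mu_i\to 0$, so that the Beta-function ratio genuinely converges to $1$ rather than to a degenerate value; the finiteness of the product (exactly $M_i$ factors) makes the limit elementary and removes any need for a uniformity argument.
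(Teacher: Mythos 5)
Your proof is correct and follows essentially the same route as the paper's: both arguments bound the supremum above by $0$ (yours via the explicit product representation of $P(W_i=0\mid M_i)$, the paper's via the general fact that a discrete log-likelihood is nonpositive) and then attain $0$ in the limit $\beta_0\to-\infty$, where $a_{1,i}\to 0$ while $a_{2,i}$ stays finite and positive. The telescoping-product evaluation of the Beta-function ratio is a slightly more elementary way to compute the same limit that the paper obtains through Gamma-function cancellation, but the underlying idea is identical.
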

\begin{proof}
We write the log-likelihood
\bal 
&\log L(\boldsymbol{\theta}|\bm{W},\bm{M})= \sumi\log \brack{\bmat M_i\\  W_i\emat \frac{B\paren{ \dfrac{e^{-\beta_0^*-\bm{X}_i^{*T}\boldsymbol{\beta}^*}}{1+e^{-\beta_0}}+W_i,\  \dfrac{e^{-\beta_0^*-\bm{X}_i^{*T}\boldsymbol{\beta}^*}}{1+e^{\beta_0}}+M_i-W_i}}{B\paren{ \dfrac{e^{-\beta_0^*-\bm{X}_i^{*T}\boldsymbol{\beta}^*}}{1+e^{-\beta_0}},\  \dfrac{e^{-\beta_0^*-\bm{X}_i^{*T}\boldsymbol{\beta}^*}}{1+e^{\beta_0}}}}}.
\eal 
Substituting $W_i=0$, using the definition of $B(\cdot,\cdot)$, and taking the limit in $\beta_0$ gives
\bal 
\lim_{\beta_0\to-\infty} \log L(\boldsymbol{\theta}|\bm{W},\bm{M}) &=\lim_{\beta_0\to-\infty}\sumi \log \brack{\Gamma\paren{\dfrac{e^{-\beta_0^*-\bm{X}_i^{*T}\boldsymbol{\beta}^*}}{1+e^{\beta_0}}+M_i}} \\
&\tab + \log \brack{\Gamma\paren{\dfrac{e^{-\beta_0^*-\bm{X}_i^{*T}\boldsymbol{\beta}^*}}{1+e^{-\beta_0}}+ \dfrac{e^{-\beta_0^*-\bm{X}_i^{*T}\boldsymbol{\beta}^*}}{1+e^{\beta_0}}}}\\
&\tab - \log \brack{\Gamma\paren{\dfrac{e^{-\beta_0^*-\bm{X}_i^{*T}\boldsymbol{\beta}^*}}{1+e^{-\beta_0}} + \dfrac{e^{-\beta_0^*-\bm{X}_i^{*T}\boldsymbol{\beta}^*}}{1+e^{\beta_0}} + M_i}}\\
&\tab - \log \brack{\Gamma\paren{\dfrac{e^{-\beta_0^*-\bm{X}_i^{*T}\boldsymbol{\beta}^*}}{1+e^{\beta_0}}}}\\ 
&= 0\\
&\geq \sup_{\beta_0} \log L(\boldsymbol{\theta}|\bm{W},\bm{M}),
\eal 
where the last inequality is because the log-likelihood associated with a discrete distribution cannot exceed 0.

Therefore,
$$ \sup_{\beta_0} \log L(\boldsymbol{\theta}|\bm{W},\bm{M}) = \lim_{\beta_0\to-\infty} \log L(\boldsymbol{\theta}|\bm{W},\bm{M})=0.$$

\end{proof}

\begin{thm}\label{theorem:2}
Consider the model \eqref{eq:bb1}--\eqref{eq:bb2} with parameters as in \eqref{eq:mu}--\eqref{eq:phi} and link functions as in  \eqref{eq:linkm}--\eqref{eq:linkp}.
Assume that $k=k^*=1$ and $X_i=X_i^* \in \set{0,1}$ for $i = 1,\ldots,n$. 
Suppose that $\sum_{i: X_i=0} W_i = 0$ and $\sum_{i: X_i=1} W_i > 0$.
Then the likelihood ratio test statistic for testing the null hypothesis that $\beta_1^*=0$ is equal to $0$.
\end{thm}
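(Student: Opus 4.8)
The plan is to exploit the fact that, since $X_i = X_i^*$ takes values in $\{0,1\}$, the log-likelihood \eqref{eq:logl} splits as a sum over the two groups $\{i : X_i = 0\}$ and $\{i : X_i = 1\}$, with the group-$j$ contribution depending on $\boldsymbol{\theta}$ only through the pair $(\beta_0 + j\beta_1,\ \beta_0^* + j\beta_1^*)$. First I would pass to the reparametrization $(a, b, a^*, b^*) \defeq (\beta_0,\ \beta_0 + \beta_1,\ \beta_0^*,\ \beta_0^* + \beta_1^*)$, which is a bijection of $\R^4$, and write $\log L = L_0(a, a^*) + L_1(b, b^*)$, where $L_0$ collects the terms with $X_i = 0$ and $L_1$ the terms with $X_i = 1$. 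Over the full parameter space the two summands decouple, so $\sup_{\boldsymbol{\theta}} \log L = \sup_{a,a^*} L_0(a,a^*) + \sup_{b,b^*} L_1(b,b^*)$.

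Next I would invoke Lemma~\ref{lemma:sub21}, applied to the subsample $\{i : X_i = 0\}$ (within which $X_i^* = 0$, so that group has no covariates and the relevant parameters are just its two intercepts) together with the hypothesis $\sum_{i : X_i = 0} W_i = 0$: this yields $\sup_{a} L_0(a, a^*) = 0$ for every fixed $a^*$, the supremum being approached as $a = \beta_0 \to -\infty$ (so $\mu_i \to 0$ and each probability of $W_i = 0$ tends to one), with $0$ an upper bound for the log-likelihood of a discrete distribution. Hence $\sup_{a,a^*} L_0 = 0$ and $\sup_{\boldsymbol{\theta}}\log L = \sup_{b,b^*} L_1(b,b^*)$.

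Then I would treat the null model $H_0 : \beta_1^* = 0$, under which $b^* = a^*$, so $\log L = L_0(a, a^*) + L_1(b, a^*)$ with $a, b, a^*$ free. For any fixed $a^*$ and $b$, sending $a = \beta_0 \to -\infty$ again drives $L_0(a, a^*) \to 0$ while leaving $L_1(b, a^*)$ untouched, so $\sup_{\boldsymbol{\theta} : \beta_1^* = 0}\log L = \sup_{b, a^*} L_1(b, a^*)$, which is exactly the same quantity as $\sup_{b, b^*} L_1(b, b^*)$. Therefore the restricted and unrestricted suprema coincide, and $\hat{T}_{LRT} = 2\paren{\sup_{\boldsymbol{\theta}} \log L - \sup_{\boldsymbol{\theta} : \beta_1^* = 0}\log L} = 0$.

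The main obstacle is bookkeeping around the fact that the supremum over the $X_i = 0$ group is attained only in the limit $\beta_0 \to -\infty$, so $\hat{\boldsymbol{\theta}}$ and $\hat{\boldsymbol{\theta}}_0$ are not literally well-defined maximizers; I would phrase everything in terms of suprema and lean on Lemma~\ref{lemma:sub21}, which already packages this limiting argument, rather than manipulating putative argmaxes. A secondary point to handle carefully is that the reparametrization fully decouples the two group contributions under the unrestricted model but only partially—through the shared $a^*$—under the null; the resolution is precisely that this shared parameter becomes irrelevant once the $X_i = 0$ contribution has been pushed to its supremum $0$.
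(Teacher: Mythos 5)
Your proof is correct and follows essentially the same route as the paper's: split the log-likelihood by covariate group, use Lemma~\ref{lemma:sub21} to drive the zero-count group's contribution to its supremum of $0$, and observe that the remaining group-$1$ optimization is unchanged by the constraint $\beta_1^*=0$ since only the sum $\beta_0^*+\beta_1^*$ enters there. Your explicit reparametrization and your handling of the shared intercept $a^*$ under the null make rigorous a step the paper passes over with ``similarly,'' but the underlying argument is the same.
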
 

\begin{proof}

Let $L_i$ represent the likelihood of the $i^{\text{th}}$ sample, so that 
\begin{equation} \label{eq:Li}
\sumi \log L_i(\beta_0,\beta_1,\beta_0^*,\beta_1^*|W_i, M_i) \equiv \log L(\beta_0,\beta_1,\beta_0^*,\beta_1^*|\bm{W},\bm{M}) .
\end{equation}

We wish to show that the likelihood ratio test statistic
\begin{align}
    \sup_{\beta_0, \beta_1, \beta_0^*, \beta_1^*} &\sumi \log L_i(\beta_0,\beta_1,\beta_0^*,\beta_1^*|W_i, M_i)\label{eq:wanttoshow} \\
    &\tab - \sup_{\beta_0,\beta_1,\beta_0^*} \sumi  L_i(\beta_0,\beta_1,\beta_0^*,\beta_1^*=0|W_i, M_i)=0\notag.
\end{align}

First, we notice that for all $i$, the parameters $\beta_0^*$ and $\beta_1^*$ enter the likelihood $L_i(\cdot)$ only though the term $\beta_0^*+\beta_1^*X_i$.
This term is equal to $\beta_0^*$ for all $i$ such that $X_i=0$, and $\beta_0^*+\beta_1^*$ for all $i$ such that $X_i=1$.
Similarly, the parameters $\beta_0$ and $\beta_1$ enter the likelihood $L_i(\cdot)$ only through the term $\beta_0$ for all $i$ such that $X_i=0$, and $\beta_0+\beta_1$ for all $i$ such that $X_i=1$.
Therefore, we can write the first term in \eqref{eq:wanttoshow} as the sum of two sub-problems,
\bal \sup_{\beta_0,\beta_1,\beta_0^*,\beta_1^*} &\sumi \log L_i(\beta_0,\beta_1,\beta_0^*,\beta_1^*| W_i,M_i)=\\
&\tab \sup_{\beta_0,\beta_1,\beta_0^*,\beta_1^*} \sum_{i:\ X_i = 0} \log L_i(\beta_0,\beta_1,\beta_0^*,\beta_1^*|W_i,M_i)\\
&\tab\tab + \sup_{\beta_0,\beta_1,\beta_0^*,\beta_1^*}\sum_{i:\ X_i=1} \log L_i(\beta_0,\beta_1,\beta_0^*,\beta_1^*|W_i,M_i)=\\
&\tab \sup_{\beta_0,\beta_1,\beta_0^*,\beta_1^*}\sum_{i:\ X_i=1} \log L_i(\beta_0,\beta_1,\beta_0^*,\beta_1^*|W_i,M_i),
\eal 
where the last equality results from Lemma~\ref{lemma:sub21}.
Similarly, we can write the second term in \eqref{eq:wanttoshow},
\bal 
\sup_{\beta_0,\beta_1,\beta_0^*} &\sumi \log L_i(\beta_0,\beta_1,\beta_0^*,\beta_1^*=0| W_i,M_i) = \sup_{\beta_0,\beta_1,\beta_0^*} \sum_{i:\ X_i=1} \log L_i(\beta_0,\beta_1,\beta_0^*,\beta_1^*=0|W_i,M_i).
\eal 
Thus, to show \eqref{eq:wanttoshow}, it suffices to show that
\begin{equation*}\label{eq:wanttoshow2}
\begin{split}
    \sup_{\beta_0, \beta_1, \beta_0^*, \beta_1^*} &\sum_{i:\ X_i=1} \log L_i(\beta_0, \beta_1, \beta_0^*, \beta_1^*|W_i,M_i) \\ 
    &\tab - \sup_{\beta_0,\beta_1,\beta_0^*} \sum_{i:\ X_i = 1} \log L_i(\beta_0, \beta_1, \beta_0^*, \beta_1^*=0|W_i,M_i) =0. 
    \end{split}
\end{equation*}
This follows directly from the fact that the parameters $\beta_0^*$ and $\beta_1^*$ enter the likelihood $L_i(\cdot)$ only though the term $\beta_0^*+\beta_1^*$ for all $i$ such that $X_i=1$. 
This completes our proof of Theorem~\ref{theorem:2}.
\end{proof}

\section*{Acknowledgements}
This work was partially supported by the NSF DGE-1256082 grant to Bryan D. Martin and the NSF CAREER DMS-1252624, NIH grant DP5OD009145, and a Simons Investigator Award in Mathematical Modeling of Living Systems grants to Daniela Witten.
The authors would thank Thea Whitman for providing data, as well as Pauline Trinh and Moira Differding for contributions to the software package.
The authors are also grateful to the R Core Team \citep{rcore} and authors of the packages brglm2 \citep{brglm}, ggplot2 \citep{ggplot}, phyloseq \citep{phyloseq}, trust \citep{trust}, and VGAM \citep{vgam}, which were used for constructing the figures and running the analyses in this article.

\end{document}